\DeclareFontFamily{OT1}{rsfs}{}
\DeclareFontShape{OT1}{rsfs}{n}{it}{<-> rsfs10}{}
\DeclareMathAlphabet{\mathscr}{OT1}{rsfs}{n}{it}
\theoremstyle{plain}
  \newtheorem{theorem}[subsubsection]{Theorem}
  \newtheorem{proposition}[subsubsection]{Proposition}
  \newtheorem{lemma}[subsubsection]{Lemma}
\theoremstyle{definition}
\theoremstyle{remark}
\numberwithin{equation}{subsection}
\title{Fast matrix multiplication techniques based on the Adleman-Lipton model}
\author{Aran Nayebi}
\email{aran.nayebi@gmail.com}
\keywords{DNA computing; residue number system; logic and arithmetic operations; Strassen algorithm}
\subjclass[2010]{Primary 65F05, 03D10; Secondary 68Q10, 68Q05, 03D80}
\begin{document}
\begin{abstract}
On distributed memory electronic computers, the implementation and association of fast parallel matrix multiplication algorithms has yielded astounding results and insights. In this discourse, we use the tools of molecular biology to demonstrate the theoretical encoding of Strassen's fast matrix multiplication algorithm with DNA based on an $n$-moduli set in the residue number system, thereby demonstrating the viability of computational mathematics with DNA. As a result, a general scalable implementation of this model in the DNA computing paradigm is presented and can be generalized to the application of \emph{all} fast matrix multiplication algorithms on a DNA computer. We also discuss the practical capabilities and issues of this scalable implementation. Fast methods of matrix computations with DNA are important because they also allow for the efficient implementation of other algorithms (i.e. inversion, computing determinants, and graph theory) with DNA.
\end{abstract}

\maketitle

\section{Introduction}
The multiplication of matrices is a fundamental operation applicable to a diverse range of algorithms from computing determinants, inverting matrices, and solving linear systems to graph theory. Indeed, Bunch and Hopcroft \cite{bunch-hopcroft} successfully proved that given an algorithm for multiplying two $n \times n$ matrices in $O(n^{\alpha})$ operations where $2 < \alpha \le 3$, then the triangular factorization of a permutation of any $n \times n$ nonsingular matrix as well as its inverse can be found in $O(n^{\alpha})$ operations. The standard method of square matrix multiplication requires $2n^3$ operations. Let $\omega$ be the smallest number such that $O(n^{\omega+\epsilon})$ multiplications suffice for all $\epsilon > 0$. Strassen \cite{strassen} presented a divide-and-conquer algorithm using noncommutative multiplication to compute the product of two matrices (of order $m2^k$) by $m^37^k$ multiplications and $(5+m)m^27^k-6m^22^{2k}$ additions. Thus, by recursive application of Strassen's algorithm, the product of two matrices can be computed by at most $(4.7)n^{\log_2 7}$ operations. Following Strassen's work, Coppersmith and Winograd \cite{coppersmith-winograd} were able to improve the exponent to 2.38. Their approaches and those of subsequent researchers rely on the same framework: For some $k$, they devise a method to multiply matrices of order $k$ with $m \lll k^3$ multiplications and recursively apply this technique to show that $\omega < \log_k m$ \cite{robinson}. Only until recently, it was long supposed that $\omega$ could take on the value of 2 without much evidence. Using a group-theoretic construction, Cohn, Kleinberg, Szegedy, and Umans \cite{exponent2} rederived the Coppersmith-Winograd algorithm to describe several families of wreath product groups that yield nontrivial upper bounds on $\omega$, the best asymptotic result being 2.41. They also presented two conjectures in which either one would imply an exponent of 2. \newline
\indent Unfortunately, although these improvements to Strassen's algorithm are theoretically optimal, they lack pragmatic value. In practice, only the Strassen algorithm is fully implemented and utilized as such: \newline
\indent \indent For even integers $m$, $n$, and $k$, let $X \in {\mathbb{R}}^{m \times k}$ and $Y \in {\mathbb{R}}^{k \times n}$ be matrices with product $Q \in {\mathbb{R}}^{m \times n}$, and set
\begin{equation*}
X = \begin{pmatrix} X_{00} & X_{01} \\ X_{10} & X_{11} \end{pmatrix}, \indent Y = \begin{pmatrix} Y_{00} & Y_{01} \\ Y_{10} & Y_{11} \end{pmatrix}, \indent Q = \begin{pmatrix} Q_{00} & Q_{01} \\ Q_{10} & Q_{11} \end{pmatrix},
\end{equation*}
where $X_{ij} \in {\mathbb{R}}^{m/2 \times k/2}$, $Y_{ij} \in {\mathbb{R}}^{k/2 \times n/2}$, and $Q_{ij} \in {\mathbb{R}}^{m/2 \times n/2}$. Then perform the following to compute $Q = XY$,
\begin{equation*}
{M}_{0} := ({X}_{00} + {X}_{11}) ({Y}_{00} + {Y}_{11}),
\end{equation*}
\begin{equation*}
{M}_{1} := ({X}_{10} + {X}_{11}){Y}_{00},
\end{equation*}
\begin{equation*}
{M}_{2} := {X}_{00} ({Y}_{01} - {Y}_{11}),
\end{equation*}
\begin{equation*}
{M}_{3} := {X}_{11} (-{Y}_{00}+{Y}_{10}),
\end{equation*}
\begin{equation*}
{M}_{4} := ({X}_{00} + {X}_{01}){Y}_{11},
\end{equation*}
\begin{equation*}
{M}_{5} := (-{X}_{00}+{X}_{10}) ({Y}_{00} + {Y}_{01}),
\end{equation*}
\begin{equation*}
{M}_{6} := ({X}_{01} - {X}_{11}) ({Y}_{10} + {Y}_{11}),
\end{equation*}
\begin{equation*} 
{Q}_{00} = {M}_{0} + {M}_{3} - {M}_{4} + {M}_{6},
\end{equation*}
\begin{equation*}
{Q}_{01} = {M}_{1} + {M}_{3},
\end{equation*}
\begin{equation*}
{Q}_{10} = {M}_{2} + {M}_{4},
\end{equation*}
\begin{equation*}
{Q}_{11} = {M}_{0} + {M}_{2} - {M}_{1} + {M}_{5}.
\end{equation*}
Even if the dimension of the matrices is not even or if the matrices are not square, it is easy to pad the matrices with zeros and perform the aforementioned algorithm. \newline 
\indent Typically, computations such as this one are performed using electronic components on a silicon substrate. In fact, it is a commonly held notion that \emph{most} computers should follow this model. In the last decade however, a newer and more revolutionary form of computing has come about, known as DNA computing. DNA's key advantage is that it can make computers much smaller than before, while at the same time maintaining the capacity to store prodigious amounts of data. Since Adleman's \cite{adleman} pioneering paper, DNA computing has become a rapidly evolving field with its primary focus on developing DNA algorithms for NP-complete problems. However, unlike quantum computing in recent years, the viability of computational mathematics on a DNA computer has not yet been fully demonstrated, for the whole field of DNA-based computing has merged to controlling  and mediating information processing for nano structures and molecular movements. In fact, only recently have the primitive operations in mathematics (i.e. addition, subtraction, multiplication, and division) been implemented. Thus, the general problem dealt with in this paper is to explore the feasibility of computational mathematics with DNA. Fujiwara, Matsumoto, and Chen \cite{fujiwara} proved a DNA representation of binary integers using single strands and presented procedures for primitive mathematical operations through simple manipulations in DNA. It is important to note that the work of Fujiwara et al. \cite{fujiwara} and those of subsequent researchers have relied upon a fixed-base number system. The fixed-base number system is a bottleneck for many algorithms as it restricts the speed at which arithmetic operations can be performed and increases the complexity of the algorithm. Parallel arithmetic operations are simply not feasible in the fixed-base number system because of the effect of a carry propagation. Recently, Zheng, Xu, and Li \cite{zheng} have presented an improved DNA representation of an integer based on the residue number system (RNS) and give algorithms of arithmetic operations in $Z_M = \{0,1,\cdots,M-1\}$ where $Z_M$ is the ring of integers with respect to modulo $M$. Their results exploit the massive parallelism in DNA mainly because of the carry-free property of all arithmetic operations (except division, of course) in RNS. \newline
\indent In this paper we present a parallelization method for performing Strassen's fast matrix multiplication methods on a DNA computer. Although DNA-based methods for the multiplication of boolean \cite{oliver} and real-numbered matrices \cite{zhang} have been proven, these approaches use digraphs and are not divide-and-conquer like Strassen's algorithm (and hence are not particularly efficient when used with DNA). Divide-and-conquer algorithms particularly benefit from the parallelism of the DNA computing paradigm because distinct sub-processes can be executed on different processors. The critical problem addressed in this paper is to provide a DNA implementation of Strassen's algorithm, while keeping in mind that in recent years it has been shown that the biomolecular operations suggested by the Adleman-Lipton model are not very reliable in practice. More specifically, the objectives we aim to accomplish in this research paper are the following:
\begin{itemize}
\item To provide in \S 2 a revised version of the Adleman-Lipton model that better handles recursive ligation and overcomes the confounding of results with the complexity of tube content.
\item To establish a systematic approach in \S 3 of representing and adding and subtracting matrices using DNA in the RNS system.
\item Next, based on this representation system, we describe in \S 4.1 an implementation of the Cannon algorithm with DNA at the bottom level.
\item And lastly, we present in \S 4.2 a method to store the different sub-matrices in different strands, and in \S 4.3, we prove a mathematical relation between the resultant matrix and the sub-matrices at recursion level $r$.
\end{itemize}
Our approach uses the Cannon algorithm at the bottom level (within a tube containing a memory strand) and the Strassen algorithm at the top level (between memory strands). We show that the Strassen-Cannon algorithm decreases in complexity as the recursion level $r$ increases \cite{algo}. If the Cannon algorithm is replaced by other parallel matrix multiplication algorithms at the bottom level (such as the Fox algorithm), our result still holds. The difficulty that arises is that in order to use the Strassen algorithm at the top level, we must determine the sub-matrices after the recursive execution of the Strassen formula $r$ times and then find the resultant matrix. On a sequential machine, this problem is trivial; however, on a parallel machine this situation becomes much more arduous. Nguyen, Lavall\'{e}e, and Bui \cite{algo} present a method for electronic computers to determine all the nodes at the unspecified level $r$ in the execution tree of the Strassen algorithm, thereby allowing for the direct calculation of the resultant matrix from the sub-matrices calculated by parallel matrix multiplication algorithms at the bottom level. Thus, we show that this result can theoretically be obtained using DNA, and combined with a storage map of sub-matrices to DNA strands and with the usage of the Cannon algorithm at the bottom level, we have a general scalable implementation of the Strassen algorithm on Adleman's DNA computer. As of the moment, we should note that this implementation is primarily theoretical because in practice, the Adleman-Lipton model is not always feasible, as explained in \S 5. The reason why we concentrate on the Strassen algorithm is that it offers superior performance than the traditional algorithm for practical matrix sizes less than $10^{20}$ \cite{algo}. However, our methods are also applicable to \emph{all} fast matrix multiplication algorithms on a DNA computer, as these algorithms are always in recursive form \cite{pan}. In addition, our results can be used to implement other algorithms such as inversion and computing determinants on a DNA computer since matrix multiplication is almost ubiquitous in application.
\section{Preliminary Theory}
\subsection{The Residue Number System}
The residue number system is defined by a set of pairwise, coprime moduli $P = \{q_{n-1},\cdots,q_0\}$. Furthermore, an integer in RNS is represented as a vector of residues with respect to the moduli set $P$. As a consequence of the Chinese remainder theorem, for any integer $x \in \left[0, M-1\right]$ where $M = \prod_{i=0}^{n-1}q_{i}$, each RNS representation is unique. As stated by Zheng, Xu, and Li \cite{zheng}, the vector $(x_{n-1},\cdots,x_{0})$ denotes the residue representation of $x$. \newline
\indent It has been previously mentioned that one of the important characteristic of RNS is that all arithmetic operations except for division are carry-free. Thus, for any two integers $x \to (x_{n-1},\cdots,x_{0})\in Z_{M}$ and $y \to (y_{n-1},\cdots,y_{0})\in Z_{M}$ we obtain the following from \cite{residue}:
\begin{equation} \label{1}
|x \circ y|_{M} \to \left(|x_{n-1}\circ y_{n-1}|_{q_{n-1}},\cdots,|x_0\circ y_0|_{q_{0}}\right),
\end{equation}
in which $\circ$ is any operation of addition, subtraction, or multiplication.
\subsection{The Adleman-Lipton Model}
In this section we present a theoretical and practical basis for our algorithms. By the Adleman-Lipton model, we define a test tube $T$ as a multi-set of (oriented) DNA sequences over the nucleotide alphabet $\{A,G,C,T\}$. The following operations can be performed as follows: \newline
\begin{itemize}
\item $Merge(T_1,T_2)$: merge the contents in tube $T_1$ and tube $T_2$, and store the results in tube $T_1$;
\item $Copy(T_1,T_2)$: make a copy of the contents in tube $T_1$ and store the result in tube $T_2$;
\item $Detect(T)$: for a given tube $T$, this operation returns ``True'' if tube $T$ contains \emph{at least} one DNA strand, else it returns ``False'';
\item $Separation(T_1,X,T_2)$: from all the DNA strands in tube $T_1$, take out only those containing the sequences of $X$ over the alphabet $\{A,G,C,T\}$ and place them in tube $T_2$;
\item $Selection(T_1,l,T_2)$: remove all strands of length $l$ from tube $T_1$ into tube $T_2$;
\item $Cleavage(T,\sigma_0\sigma_1)$: given a tube $T$ and a sequence $\sigma_0\sigma_1$, for every strand containing $\begin{bmatrix}\sigma_0\sigma_1 \\ \overline{\sigma_0\sigma_1}\end{bmatrix}$, then the cleavage operation can be performed as such:
\begin{equation*}
\begin{bmatrix}\alpha_0\sigma_0\sigma_1\beta_0 \\ \alpha_1\overline{\sigma_0\sigma_1}\beta_1\end{bmatrix} \xrightarrow{Cleavage(T,\sigma_0\sigma_1)}\begin{bmatrix}\alpha_0\sigma_0 \\ \alpha_1\overline{\sigma_0}\end{bmatrix}, \indent \begin{bmatrix}\sigma_1\beta_0 \\ \overline{\sigma_1}\beta_1\end{bmatrix},
\end{equation*}
where the overhead bar denotes the complementary strand.
\item $Annealing(T)$: produce \emph{all} feasible double strands in tube $T$ and store the results in tube $T$ (the assumption here is that ligation is executed after annealing);
\item $Denaturation(T)$: disassociate every double strand in tube $T$ into two single strands and store the results in tube $T$;
\item $Empty(T)$: empty tube $T$.
\end{itemize}
According to \cite{residue}, the complexity of each of the aforementioned operations is $O(1)$.

\subsection{Revised Adleman-Lipton Model through Ligation by Selection}
In practice, the recursive properties of our implementation of the Strassen-Canon algorithm require a massive ligation step that is not feasible. The reason is that, in practice, the biomolecular operations suggested by the Adleman-Lipton model are not completely reliable. This ligation step cannot produce longer molecules as required by our implementation, and certainly not more than 10-15 ligations in a row. Not to mention that both the complexity of the tube content and the efficiency of the enzyme would obscure the results. As a result of these considerations, the operations $Separation(T1, X, T2)$ and $Annealing(T)$ presented in \S 2.2 function with questionable success when applied to a complex test tube, especially when recursion is used. \newline
\indent Therefore, in order for matrix multiplication under the Adleman-Lipton model to be completely reliable in practice and the aforementioned problems circumvented, these streptavidin based operations must be improved upon. That way, the parallelization offered by DNA can be utilized as an important mathematical tool with performance capabilities comparable to the electronics.
One way we propose to overcome this potential setback of ligation is to use a modified ligation procedure that can handle longer molecules in place of the original, termed "ligation by selection" presented in \cite{kodumal}. Ligation by selection (LBS) is a method to ligate multiple adjacent DNA fragments that does not require intermediate fragment isolation and is amenable to parallel processing, therefore reducing the obfuscation of the results by the complexity of tube content. Essentially in LBS, fragments that are adjacent to each other are cloned into plasmid markers that have a common antibiotic marker, a linking restriction site for joining the fragments, a linking restriction site on the vector, and each vector has a unique site to be used for restriction-purification and a unique antibiotic marker. The method is applied to efficiently stitch multiple synthetic DNA fragments of 500-800 bp together to produce segments of up to 6000 bp in \cite{kodumal}. For a cogent and complete explanation of ligation by selection we refer the reader to \cite{kodumal}. \newline
\indent To utilize LBS recursively, the alteration of resistance markers and restriction-purification sites of acceptor and donor vectors that occur in each LBS cycle must be accounted for in order to minimize the number of cycles required in parallel processing. As opposed to conventional ligation, the advantages that LBS has are \cite{kodumal}:
\begin{itemize}
\item The avoidance of the need to isolate, purify, and ligate individual fragments
\item The evasion of the need for specialized MCS linkers
\item And most importantly, the ease with which parallel processing of operations may be applied
\end{itemize}
Hence, in order for the Adleman-Lipton model to be more relaible in the recursive operations our implementation of Strassen's algorithm requires, we replace the ligation procedure of \S 2.2 with LBS.

\section{DNA Matrix Operations in RNS}
\subsection{DNA Representation of a Matrix in RNS}
We extend the DNA representation of integers in RNS presented in \cite{zheng} to representing an entire matrix $Y$ in RNS by way of single DNA strands. Let matrix $Y$ be a $t \times t$ matrix with:
\footnotesize{
\begin{equation*}
Y = \begin{pmatrix}y_{11} & y_{12} & \cdots & y_{1t} \\ y_{21} & y_{22} & \cdots & y_{2t} \\ \vdots & \vdots & \ddots & \vdots \\ y_{t1} & y_{t2} & \cdots & y_{tt}\end{pmatrix}.
\end{equation*}}
\normalsize{
The key here is the RNS representation of each element $y_{qr}$ in the hypothetical matrix $Y$ with $1 \le q \le t$ and $1 \le r \le t$ by way of DNA strands. \newline
\indent We first utilize the improved DNA representation of $n$ binary numbers with $m$ binary bits as described in \cite{zheng} for the alphabet $\sum$:
\footnotesize{
\begin{equation*}
\sum = \{A_i,B_j,C_0,C_1,E_0,E_1,D_0,D_1,1,0,\#|0\le i \le M-1,0\le j \le m\}.
\end{equation*}}
\normalsize{
Here, $A_i$ indicates the address of $M$ integers in RNS; $B_j$ denotes the binary bit position; $C_0$, $C_1$, $E_0$, $E_1$, $D_0$, and $D_1$ are used in the $Cleavage$ operation; $\#$ is used in the $Separation$ operation; and 0 and 1 are binary numbers. Thus, in the residue digit position, the value of the bit $y_{qr}$ with a bit address of $i$ and a bit position of $j$ can be represented by a single DNA strand $(S_{i,j}){y_{qr}}$:
}
\footnotesize{
\begin{equation} \label{2}
(S_{i,j})_{qr} = (D_1B_jE_0E_1A_iC_0C_1VD_0){y_{qr}},
\end{equation}}
\normalsize{
for $V \in \{0,1\}$. Hence, the matrix $Y$ can be represented as such:
}
}
\footnotesize{
\begin{equation*}
Y = \begin{pmatrix}(D_1B_jE_0E_1A_iC_0C_1VD_0)_{y_{11}} & (D_1B_jE_0E_1A_iC_0C_1VD_0)_{y_{12}} & \cdots & (D_1B_jE_0E_1A_iC_0C_1VD_0)_{y_{1t}} \\ (D_1B_jE_0E_1A_iC_0C_1VD_0)_{y_{21}} & (D_1B_jE_0E_1A_iC_0C_1VD_0)_{y_{22}} & \cdots & (D_1B_jE_0E_1A_iC_0C_1VD_0)_{y_{2t}} \\ \vdots & \vdots & \ddots & \vdots \\ (D_1B_jE_0E_1A_iC_0C_1VD_0)_{y_{t1}} & (D_1B_jE_0E_1A_iC_0C_1VD_0)_{y_{t2}} & \cdots & (D_1B_jE_0E_1A_iC_0C_1VD_0)_{y_{tt}}\end{pmatrix},
\end{equation*}}
\normalsize{
where each strand-element is not necessarily distinct. The reader must keep in mind that $M$ integers in RNS defined by the $n$-moduli set $P$ can be represented by $2M(m+1)$ different memory strands, whereas in the binary system, the respresentation of $M$ integers requires $2M\left(1+\sum_{i=0}^{n-1}m_i\right)$ different memory strands.
}

\normalsize{
\subsection{Residue Number Arithmetic with Matrices}
From \eqref{1}, it is apparent that the operation $\circ$ is carry-free, thereby allowing for the employment of parallel procedures in all residue digits. In \cite{zheng} two properties are given for the modular operation involving two integers $x \to \left(x_{n-1},\cdots,x_0\right)$ and $y \to \left(y_{n-1},\cdots,y_0\right)$ in RNS defined by the set $P = \{2^{m_{n-1}},2^{m_{n-2}}-1,\cdots,2^{m_0}-1\}$.
\begin{lemma} \label{property1}
For $\forall j$, $m_{n-1} \in \mathbb{N}$, if $j < m_{n-1}$ then $|2^{j}|_{2^{m_{n-1}}} = 2^j$ else $|2^{j}|_{2^{m_{n-1}}} = 0$.
\end{lemma}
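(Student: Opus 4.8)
The plan is to treat Lemma~\ref{property1} as an elementary fact about powers of two modulo a power of two and prove it by a two-case analysis straight from the definition of the least non-negative residue. First I would recall that $|a|_{M}$ denotes the unique representative of the coset $a + M\mathbb{Z}$ lying in $\{0,1,\dots,M-1\}$; thus, to evaluate $|2^{j}|_{2^{m_{n-1}}}$ it is enough to exhibit a single integer in that range which is congruent to $2^{j}$ modulo $2^{m_{n-1}}$, and uniqueness of the residue does the rest.

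In the first case, $j < m_{n-1}$, I would note that $1 \le 2^{j} \le 2^{m_{n-1}-1} < 2^{m_{n-1}}$, so $2^{j}$ itself already lies in $\{0,1,\dots,2^{m_{n-1}}-1\}$; being congruent to itself, it must be its own residue, giving $|2^{j}|_{2^{m_{n-1}}} = 2^{j}$.

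In the second case, $j \ge m_{n-1}$, I would write $2^{j} = 2^{m_{n-1}} \cdot 2^{\,j-m_{n-1}}$, where $j-m_{n-1}$ is a non-negative integer, so that $2^{m_{n-1}}$ divides $2^{j}$ and hence $2^{j} \equiv 0 \pmod{2^{m_{n-1}}}$. Since $0$ lies in the admissible range, this forces $|2^{j}|_{2^{m_{n-1}}} = 0$. Concatenating the two cases yields the statement.

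The only point requiring attention — and it is bookkeeping rather than a genuine obstacle — is the placement of the boundary value $j = m_{n-1}$: it belongs to the second case, since there $2^{m_{n-1}} \mid 2^{m_{n-1}}$, which is precisely why the first case needs the strict inequality $j < m_{n-1}$ to guarantee $2^{j} < 2^{m_{n-1}}$. No auxiliary results beyond the definition of modular reduction are needed.
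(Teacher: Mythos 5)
Your two-case argument is correct and complete: for $j < m_{n-1}$ the integer $2^{j}$ already lies in $\{0,\dots,2^{m_{n-1}}-1\}$ and so is its own least non-negative residue, while for $j \ge m_{n-1}$ the divisibility $2^{m_{n-1}} \mid 2^{j}$ forces the residue to be $0$; your remark about the boundary case $j = m_{n-1}$ is exactly the right bookkeeping. Note that the paper itself supplies no proof of this lemma --- it is quoted verbatim from Zheng, Xu, and Li --- so your elementary derivation from the definition of modular reduction is precisely what a self-contained treatment would add.
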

\begin{lemma} \label{property2}
For $l = 0,\cdots,n-2$, let $x_l+y_l = z_l$ where $z_l = (z_{l(m_l)},\cdots,z_{l0})$. If $z_l > 2^{m_l}-1$, then $|z_l|_{2^{m_l}-1} = 1 + \sum_{j=0}^{m_l-1}z_{lj}2^{j}$.
\end{lemma}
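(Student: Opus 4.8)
The plan is to reduce $z_l$ modulo $2^{m_l}-1$ using the single congruence $2^{m_l}\equiv 1\pmod{2^{m_l}-1}$ (the Mersenne analogue of Lemma~\ref{property1}, valid because $2^{m_l}=(2^{m_l}-1)+1$), and then to check that the value so obtained is already a legitimate residue, so that no further reduction is needed.

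First I would pin down the sizes of the quantities involved. Since $x_l$ and $y_l$ are residues modulo $2^{m_l}-1$, each lies in $\{0,1,\dots,2^{m_l}-2\}$, so $0\le z_l=x_l+y_l\le 2^{m_l+1}-4$; in particular $z_l$ genuinely fits in the $m_l+1$ bits $(z_{l(m_l)},\dots,z_{l0})$. The hypothesis $z_l>2^{m_l}-1$, i.e.\ $z_l\ge 2^{m_l}$, then forces the leading bit to be $z_{l(m_l)}=1$, since otherwise $z_l=\sum_{j=0}^{m_l-1}z_{lj}2^j\le 2^{m_l}-1$, a contradiction.

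Next I would carry out the reduction itself. Writing $a:=\sum_{j=0}^{m_l-1}z_{lj}2^j$ for the bottom $m_l$ bits, we have $z_l=2^{m_l}+a$, hence
\[
|z_l|_{2^{m_l}-1}=|2^{m_l}+a|_{2^{m_l}-1}=|1+a|_{2^{m_l}-1},
\]
using $2^{m_l}\equiv 1$. It then remains to see that $1+a$ is \emph{itself} the reduced residue, i.e.\ that $0\le 1+a\le 2^{m_l}-2$. This is where the bound from the first step is essential: from $z_{l(m_l)}=1$ and $z_l\le 2^{m_l+1}-4$ we get $a=z_l-2^{m_l}\le 2^{m_l}-4$, so $1+a\le 2^{m_l}-3<2^{m_l}-1$, while obviously $1+a\ge 1$. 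Therefore $|z_l|_{2^{m_l}-1}=1+a=1+\sum_{j=0}^{m_l-1}z_{lj}2^j$, which is the claimed formula.

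The only genuinely delicate point is this last one: a priori $1+a$ could be as large as $2^{m_l}$ (were all low bits set), which would demand a second subtraction of $2^{m_l}-1$ and destroy the stated identity. The estimate above rules that out, but it uses crucially that $x_l$ and $y_l$ are bona fide residues — each at most $2^{m_l}-2$ — rather than arbitrary $m_l$-bit integers, so I would state that hypothesis explicitly where it is invoked. Everything else is routine bookkeeping with binary expansions.
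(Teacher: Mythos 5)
Your proof is correct. Note that the paper itself gives no proof of this lemma --- it is quoted from Zheng, Xu, and Li without argument --- so there is nothing to compare against; your verification via $2^{m_l}\equiv 1 \pmod{2^{m_l}-1}$ together with the bound $z_l\le 2^{m_l+1}-4$ is the natural one. Your closing observation is the real content: the identity only holds because $x_l$ and $y_l$ are genuine reduced residues (each at most $2^{m_l}-2$), which guarantees $1+\sum_{j=0}^{m_l-1}z_{lj}2^j\le 2^{m_l}-3$ and so rules out the need for a second subtraction of the modulus.
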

\indent Next, the procedures \emph{RNSAdd} and \emph{RNSDiff} add and subtract two integers in RNS defined by the moduli set $P$, respectively. The pseudocode for \emph{RNSAdd} and \emph{RNSDiff} is given in \S 4.4 of \cite{zheng}, and we refer the reader to that source (note that the pseudocode of \cite{zheng} for both algorithms utilizes the operations presented in \S 2.2 extensively). Instead, we provide some background on the two procedures. The inputs are $2n$ tubes $T_{l}^{x_{qr}}$ and $T_{l}^{y_{qr}}$ (for $l = 0, \cdots, n-1$) containing the memory strands representing the elements $x_{qr}$ and $y_{qr}$ of $t \times t$ matrices $X$ and $Y$, respectively. Once either operation is complete, it returns $n$ tubes $T_{l}^{Rsum}$ and $T_{l}^{Rdiff}$ containing the result of residue addition or subtraction, respectively. We also use the following $n$ temporary tubes for \emph{RNSAdd}, namely, $T_{temp}^{l}$, $T_{sum}^{l}$, and $T_{sum'}^{l}$. Similarly for \emph{RNSDiff}, the $n$ temporary tubes, $T_{temp}^{l}$, $T_{diff}^{l}$, and $T_{diff'}^{l}$ are used. \newline
\indent Thus, based on Lemma~\ref{property1} and Lemma~\ref{property2}, we introduce the following two algorithms for matrix addition and subtraction in RNS which will be used when dealing with the block matrices in Strassen's algorithm. For the sake of example, we are adding (and subtracting) the hypothetical $t \times t$ matrices $X$ and $Y$. Essentially, the \emph{RNSMatrixAdd} and \emph{RNSMatrixDiff} algorithms employ \emph{RNSAdd} and \emph{RNSDiff} in a nested FOR loop.
\subsubsection{Matrix Addition}
\indent The procedure \emph{RNSMatrixAdd} is defined as: \newline
\footnotesize{\begin{pseudocode}{RNSMatrixAdd}{T_X,T_Y}
\FOR q \GETS 1 \TO t \DO \\
\BEGIN
	\FOR r \GETS 1 \TO t \DO \\
		\BEGIN
		\text{RNSAdd}(T_{n-1}^{x_{qr}},\cdots,T_{0}^{x_{qr}},T_{n-1}^{y_{qr}},\cdots,T_{0}^{y_{qr}});
		\END\\
\END\\
\end{pseudocode}}
\normalsize{\subsubsection{Matrix Subtraction}
\indent The procedure \emph{RNSMatrixDiff} is defined as: \newline
\footnotesize{\begin{pseudocode}{RNSMatrixDiff}{T_X,T_Y}
\FOR q \GETS 1 \TO t \DO \\
\BEGIN
	\FOR r \GETS 1 \TO t \DO \\
		\BEGIN
		\text{RNSDiff}(T_{n-1}^{x_{qr}},\cdots,T_{0}^{x_{qr}},T_{n-1}^{y_{qr}},\cdots,T_{0}^{y_{qr}});\\
		\END\\
\END\\
\end{pseudocode}}
\normalsize{\section{Strassen's Algorithm Revisited}
\subsection{Bottom-Level Matrix Multiplication}
Although a vast repository of traditional matrix multiplication algorithms can be used between processors (or in our case, test tubes containing memory strands; however for the sake of brevity, we shall just use the term ``memory strand'' or ``strand''), we will employ the Cannon algorithm \cite{cannon} since it can be used on matrices of any dimension. We will only discuss square strand arrangments and square matrices for simplicity's sake. Assume that we have $p^2$ memory strands, organized in a logical sequence in a $p \times p$ mesh. For $i \ge 0$ and $j \le p-1$, the strand in the $i^{\text{th}}$ row and $j^{\text{th}}$ column has coordinates $(i,j)$. The matrices $X$, $Y$, and their matrix product $Q$ are of size $t \times t$, and again as a simplifying assumption, let $t$ be a multiple of $p$. All matrices will be partitioned into $p \times p$ blocks of $s \times s$ sub-matrices where $s = t/p$. As described in \cite{algo}, the mesh can be percieved as an amalgamation of rings of memory strands in both the horizontal and vertical directions (opposite sides of the mesh are linked with a torus interconnection). A successful DNA implementation of Cannon's algorithm requires communication between the strands of each ring in the mesh where the blocks of matrix $X$ are passed \emph{in parallel} to the left along the horizontal rings and the blocks of the matrix $Y$ are passed to the top along the vertical rings. Let $X_{ij}$, $Y_{ij}$, and $Q_{ij}$ denote the blocks of $X$, $Y$, and $Q$ stored in the strand with coordinates $(i,j)$. The Cannon algorithm on a DNA computer can be described as such: \newline
\newline
\footnotesize{
\begin{pseudocode}{Cannon}{T_{X_{ij}},T_{Y_{ij}}}
\FOR i^{\text{th}}\text{ column} \GETS 0 \TO i \DO \\
	\BEGIN
	\text{LeftShift}(T_{X_{ij}})\\
	\END\\
\FOR j^{\text{th}}\text{ column} \GETS 0 \TO j \DO \\
	\BEGIN
	\text{UpShift}(T_{Y_{ij}})\\
	\END\\
\forall \text{ strands } (i,j) \DO\\
	\BEGIN
	\text{ValueAssignment}(T_{X_{ij}Y_{ij}},T_{Q_{ij}})\\
	\END\\
\DO (p-1) \text{ times}\\
	\BEGIN
	\text{LeftShift}(T_{X_{ij}})\\
	\text{UpShift}(T_{Y_{ij}})\\
	\text{ValueAssignment}\left(T_{\text{RNSMatrixAdd}(T_{Q_{ij}},T_{X_{ij}Y_{ij}})},T_{Q_{ij}}\right)\\
	\END\\
\end{pseudocode}
}
\normalsize{
\newline
\indent Note that the procedure \emph{UpShift} can be derived from Zheng et al.'s \cite{zheng} \emph{LeftShift}. Now we examine the run-time of the Cannon algorithm. The run time can be componentized into the communication time and the computation time, and the total communication time is
\begin{equation} \label{5}
2p\alpha + \frac{2B\beta t^2}{p},
\end{equation}
and the computation time is
\begin{equation} \label{6}
\frac{2t^3t_{comp}}{p^2},
\end{equation}
where $t_{comp}$ is the execution time for one arithmetic operation, $\alpha$ is the latency, $\beta$ is the sequence-transfer rate, the total latency is $2p \alpha$, and the total sequence-transfer time is $2p\beta B(m/p)^2$ with $B$ as the number of sequences to store one entry of the matrices. According to \cite{algo}, the running time is
\begin{equation} \label{7}
T(t) = \frac{2t^3t_{comp}}{p^2} + 2p\alpha + \frac{2B\beta t^2}{p}.
\end{equation}
\subsection{Matrix Storage Pattern}
The primary difficulty is to be able to store the different sub-matrices of the Strassen algorithm in different strands, and these sub-matrices must be copied or moved to appropriate strands if tasks are spawned. Hence, we present here a storage map of sub-matrices to strands based on the result of Luo and Drake \cite{scalable} for electronic computers. Essentially, if we allow each strand to have a portion of each sub-matrix at each resursion level, then we can make it possible for all strands to act as \emph{one} strand. As a result, the addition and subtraction of the block matrices performed in the Strassen algorithm at all recursion levels can be performed in parallel without any inter-strand communication \cite{algo}. Each strand performs its local sub-matrix additions and subtractions in RNS (via \emph{RNSMatrixAdd} and \emph{RNSMatrixDiff} described in \S 3). At the final recursion level, the block matrix multiplications are calculated using the Cannon algorithm in \S 4.1. \newline
\indent For instance, if we suppose that the recursion level in the Strassen-algorithm is $r$, and let $n = t/p$, $t_0 = t/2$, and $n_0 = t_0/p$ for $n,t_0,n_0 \in \mathbb{N}$, then the run-time of the Strassen-Canon algorithm is:
\begin{equation} \label{8}
T(t) = 18T_{add}\left(\frac{t}{2}\right) + 7T\left(\frac{t}{2}\right),
\end{equation}
where $T_{add}\left(\frac{t}{2}\right)$ is the run-time to add or subtract block matrices of order $t/2$. \newline
Additionally, according to (9) of \cite{algo},
\begin{equation} \label{9}
T_t \approx \frac{2(\frac{7}{8})^r t^3t_{comp}}{p^2} + \frac{5(\frac{7}{4})^r t_{comp}}{p^2} + \left(\frac{7}{4}\right)^r 2p\alpha.
\end{equation}
Since the asymptotically significant term $\frac{2(\frac{7}{8})^r t^3t_{comp}}{p^2}$ decreases as the recursion level $r$ increases, then for $t$ significantly large, the Strassen-Cannon algorithm should be faster than the Cannon algorithm. Even if the Cannon algorithm is replaced at the bottom level by other parallel matrix multiplication algorithms, the same result holds.
\subsection{Recursion Removal}
As has been previously discussed, in order to use the Strassen algorithm between strands (at the top level), we must determine the sub-matrices after $r$ times recursive execution and then to determine the resultant matrix from these sub-matrices. Nguyen et al. \cite{algo} recently presented a method on electronic computers to ascertain all of the nodes in the execution tree of the Strassen algorithm at the unspecified recursion level $r$ and to determine the relation between the sub-matrices and the resultant matrix at level $r$. We extend it to the DNA computing paradigm. At each step, the algorithm will execute a multiplication between 2 factors, namely the linear combinations of the elements of the matrices $X$ and $Y$, respectively. Since we can consider that each factor is the sum of all elements from each matrix, with coefficient of 0, -1, or 1 \cite{algo}, then we can represent these coefficients with the RNS representation of numbers with DNA strands described in \S 3.1 as such:
\begin{equation*}
\begin{split}
(\{D_1B_1E_0E_1A_0C_0C_10D_0,D_1B_0E_0E_1A_0C_0C_10D_0\},\{D_1B_1E_0E_1A_0C_0C_10D_0,D_1B_0E_0E_1A_0C_0C_10D_0\},\\ \{D_1B_1E_0E_1A_0C_0C_10D_0,D_1B_0E_0E_1A_0C_0C_10D_0\}),
\end{split}
\end{equation*}
\begin{equation*}
\begin{split}
(\{D_1B_1E_0E_1A_{-1}C_0C_11D_0,D_1B_0E_0E_1A_{-1}C_0C_11D_0\},\{D_1B_1E_0E_1A_{-1}C_0C_11D_0,D_1B_0E_0E_1A_{-1}C_0C_11D_0\},\\
\{D_1B_1E_0E_1A_{-1}C_0C_11D_0,D_1B_0E_0E_1A_{-1}C_0C_11D_0\}),
\end{split}
\end{equation*}
or
\begin{equation*}
\begin{split}
(\{D_1B_1E_0E_1A_{1}C_0C_10D_0,D_1B_0E_0E_1A_{1}C_0C_11D_0\},\{D_1B_1E_0E_1A_{1}C_0C_10D_0,D_1B_0E_0E_1A_{1}C_0C_11D_0\},\\
\{D_1B_1E_0E_1A_{1}C_0C_10D_0,D_1B_0E_0E_1A_{1}C_0C_11D_0\}),
\end{split}
\end{equation*}
respectively. For the sake of brevity, we shall denote the latter three equations as $(0)_{RNS}$, $(-1)_{RNS}$, and $(1)_{RNS}$, respectively. This coefficient is obtained for each element in each recursive call and is dependent upon both the index of the call and the location of an element in the division of the matrix by 4 sub-matrices \cite{algo}. If we view the Strassen-Cannon algorithm's execution as an execution tree \cite{algo}, then each scalar multiplication is correlated on a leaf of the execution tree and the path from the root to the leaf represents the recursive calls leading to the corresponding multiplication. Furthermore, at the leaf, the coefficient of each element (either $(0)_{RNS}$, $(-1)_{RNS}$, or $(1)_{RNS}$) can be determined by the combination of all computations in the path from the root. The reason is that since all of the computations are linear, they can be combined in the leaf (which we will denote by $t_l$). \newline
\indent Utilizing the nomenclature of \cite{algo}, Strassen's formula can be depicted as such: \newline
For $l = 0\cdots6$,
\begin{equation} \label{10}
t_l = \sum_{i,j=0,1}x_{ij}SX(l,i,j) \times \sum_{i,j=0,1}y_{ij}SY(l,i,j),
\end{equation}
and
\begin{equation} \label{11}
q_{ij} = \sum_{l=0}^{6}t_lSQ(l,i,j),
\end{equation}
in which
\begin{center}
$SX$ = 
\begin{tabular}{ | l || c | c | c || r | }
  \textbf{l\textbackslash ij} & \textbf{00} & \textbf{01} & \textbf{10} & \textbf{11} \\
  \textbf{0} & $(1)_{RNS}$ & $(0)_{RNS}$ & $(0)_{RNS}$ & $(0)_{RNS}$\\
  \textbf{1} & $(0)_{RNS}$ & $(1)_{RNS}$ & $(0)_{RNS}$ & $(0)_{RNS}$\\
  \textbf{2} & $(0)_{RNS}$ & $(0)_{RNS}$ & $(1)_{RNS}$ & $(1)_{RNS}$\\
  \textbf{3} & $(-1)_{RNS}$ & $(0)_{RNS}$ & $(1)_{RNS}$ & $(1)_{RNS}$\\
  \textbf{4} & $(1)_{RNS}$ & $(0)_{RNS}$ & $(-1)_{RNS}$ & $(0)_{RNS}$\\
  \textbf{5} & $(0)_{RNS}$ & $(0)_{RNS}$ & $(1)_{RNS}$ & $(1)_{RNS}$\\
  \textbf{6} & $(0)_{RNS}$ & $(0)_{RNS}$ & $(0)_{RNS}$ & $(1)_{RNS}$\\
\end{tabular}
\end{center}
\begin{center}
$SY$ = 
\begin{tabular}{ | l || c | c | c || r | }
  \textbf{l\textbackslash ij} & \textbf{00} & \textbf{01} & \textbf{10} & \textbf{11} \\
  \textbf{0} & $(1)_{RNS}$ & $(0)_{RNS}$ & $(0)_{RNS}$ & $(0)_{RNS}$\\
  \textbf{1} & $(0)_{RNS}$ & $(0)_{RNS}$ & $(1)_{RNS}$ & $(0)_{RNS}$\\
  \textbf{2} & $(-1)_{RNS}$ & $(1)_{RNS}$ & $(0)_{RNS}$ & $(0)_{RNS}$\\
  \textbf{3} & $(1)_{RNS}$ & $(-1)_{RNS}$ & $(0)_{RNS}$ & $(1)_{RNS}$\\
  \textbf{4} & $(0)_{RNS}$ & $(-1)_{RNS}$ & $(0)_{RNS}$ & $(1)_{RNS}$\\
  \textbf{5} & $(0)_{RNS}$ & $(1)_{RNS}$ & $(0)_{RNS}$ & $(1)_{RNS}$\\
  \textbf{6} & $(-1)_{RNS}$ & $(1)_{RNS}$ & $(1)_{RNS}$ & $(-1)_{RNS}$\\
\end{tabular}
\end{center}
\begin{center}
$SQ$ = 
\begin{tabular}{ | l || c | c | c || r | }
  \textbf{l\textbackslash ij} & \textbf{00} & \textbf{01} & \textbf{10} & \textbf{11} \\
  \textbf{0} & $(1)_{RNS}$ & $(1)_{RNS}$ & $(1)_{RNS}$ & $(1)_{RNS}$\\
  \textbf{1} & $(1)_{RNS}$ & $(0)_{RNS}$ & $(0)_{RNS}$ & $(0)_{RNS}$\\
  \textbf{2} & $(0)_{RNS}$ & $(1)_{RNS}$ & $(0)_{RNS}$ & $(0)_{RNS}$\\
  \textbf{3} & $(0)_{RNS}$ & $(1)_{RNS}$ & $(1)_{RNS}$ & $(1)_{RNS}$\\
  \textbf{4} & $(0)_{RNS}$ & $(0)_{RNS}$ & $(0)_{RNS}$ & $(1)_{RNS}$\\
  \textbf{5} & $(0)_{RNS}$ & $(1)_{RNS}$ & $(0)_{RNS}$ & $(0)_{RNS}$\\
  \textbf{6} & $(0)_{RNS}$ & $(0)_{RNS}$ & $(0)_{RNS}$ & $(1)_{RNS}$\\
\end{tabular}
\end{center}
At recursion level $r$, $t_l$ can be represented as such: \newline
For $l = 0\cdots{7^k}-1$,
\begin{equation} \label{12}
t_l = \sum_{i,j = n-1}x_{ij}SX_k(l,i,j) \times \sum_{i,j=0,n-1}y_{ij}SY_k(l,i,j),
\end{equation}
and
\begin{equation} \label{13}
q_{ij} = \sum_{l=0}^{{7^k}-1}t_lSQ_k(l,i,j).
\end{equation}
\indent It is easy to see that $SX = SX_1$, $SY = SY_1$, and $SQ = SQ_1$; however, the difficulty that arises is to determine the values of matrices $SX_k$, $SY_k$, and $SQ_k$ in order to have a \emph{general} algorithm. The following relations were proved in \cite{algo2}, and we shall prove that these results hold with DNA: \newline
\begin{equation} \label{14}
SX_k(l,i,j) = \prod_{r=1}^{k}SX(l_r,i_r,j_r),
\end{equation}
\begin{equation} \label{15}
SY_k(l,i,j) = \prod_{r=1}^{k}SY(l_r,i_r,j_r),
\end{equation}
\begin{equation} \label{16}
SQ_k(l,i,j) = \prod_{r=1}^{k}SQ(l_r,i_r,j_r).
\end{equation}
First we shall extend the definition of the tensor product for arrays of arbitrary dimensions \cite{algo2} by representing the tensor product in RNS by way of single DNA strands.
\begin{proposition}
Let $A$ and $B$ be arrays of the same dimension $l$ and of size $m_1 \times m_2 \times \cdots \times m_l$ and $n_1 \times n_2 \times \cdots \times n_l$, respectively. The elements of $A$ and $B$ are represented using RNA by way of DNA strands as presented in detail in \S 3.1. The tensor product can thus be described as an array of the same dimension and of size $m_1n_1 \times m_2n_2 \times \cdots \times m_ln_l$ in which each element of $A$ is replaced with the product of the element and $B$. This product can be computed with the algorithm $\text{RNSMult}$ which is recognized by a serial of operations of the $\text{RNSAdd}$ algorithm detailed in \S 4.4 of Zheng et al. \cite{zheng}. $P = A \otimes B$ where $P[i_1,i_2,\cdots,i_l] = A[k_1,k_2,\cdots,k_l]B[h_1,h_2,\cdots,h_l]$. $1 \le \forall j \le l$, $i_j = k_jn_j+h_j$ ($k_jn_j$ and $h_j$ will be added with $\text{RNSAdd}$).
\end{proposition}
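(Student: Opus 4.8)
The plan is to split the statement into a combinatorial part about indexing and a computational part about DNA arithmetic. The combinatorial claim is that $P$, defined by $P[i_1,\dots,i_l]=A[k_1,\dots,k_l]B[h_1,\dots,h_l]$ with $i_j=k_jn_j+h_j$, is a well-defined array of size $m_1n_1\times\cdots\times m_ln_l$; the computational claim is that each entry of $P$ is produced, in the RNS-over-DNA representation of \S 3.1, by the algorithm $\text{RNSMult}$, itself built from iterated calls to the $\text{RNSAdd}$ procedure of \S 4.4 of \cite{zheng}.

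First I would verify the indexing. For a fixed coordinate $j$, the map $(k_j,h_j)\mapsto k_jn_j+h_j$ from $\{0,\dots,m_j-1\}\times\{0,\dots,n_j-1\}$ to $\{0,\dots,m_jn_j-1\}$ is a bijection: this is exactly the existence and uniqueness of Euclidean division, in which $k_j$ is the quotient of $i_j$ by $n_j$ and $h_j$ the remainder. Taking the product of these bijections over $j=1,\dots,l$ shows the index set of $P$ is in bijection with the product of the index sets of $A$ and $B$, so every entry of $P$ is assigned exactly one value $A[k_1,\dots,k_l]B[h_1,\dots,h_l]$ and $P$ has the asserted shape. This matches the classical array tensor-product convention of \cite{algo2}, so $P=A\otimes B$ in the intended sense.

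Next I would treat the DNA computation in two pieces. For the index arithmetic $i_j=k_jn_j+h_j$: since $n_j$ is a fixed constant, $k_jn_j$ is obtained from $k_j$ by a bounded double-and-add schedule in which each doubling and each accumulation is one $\text{RNSAdd}$ call on the strands of \S 3.1, and one further $\text{RNSAdd}$ adds $h_j$; the carry-free identity \eqref{1} lets the $n$ residue channels run simultaneously, and the reductions modulo $2^{m_{n-1}}$ and modulo $2^{m_l}-1$ are supplied by Lemma~\ref{property1} and Lemma~\ref{property2} exactly as inside $\text{RNSAdd}$. For the entrywise product $A[\cdots]B[\cdots]$: again by \eqref{1} multiplication is coordinatewise and carry-free, so it suffices to form $|x_iy_i|_{q_i}$ in each channel; reading $y_i$ through the binary bits already encoded in its memory strand, $\text{RNSMult}$ realizes $|x_iy_i|_{q_i}$ as the shift-and-add sum $\sum_b 2^b x_i$ over the set bits $b$ of $y_i$, each term and each partial sum being an $\text{RNSAdd}$ operation, with Lemma~\ref{property1} and Lemma~\ref{property2} once more providing the modular reductions. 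Assembling: run $\text{RNSMult}$ on every entry pair picked out by the bijection of the preceding paragraph and deposit the result in the strand addressed by $i_1,\dots,i_l$.

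The main obstacle is that $\text{RNSMult}$ is only described in the statement as ``a serial of operations of $\text{RNSAdd}$'', so the real content is to make that description precise and to confirm it remains inside the revised Adleman--Lipton model of \S 2.3: one must exhibit the explicit shift-and-add reduction of residue multiplication to a bounded list of $\text{RNSAdd}$ invocations, check that it uses only the $O(1)$ tube primitives of \S 2.2 (with LBS substituted for raw ligation per \S 2.3), and check that Lemmas~\ref{property1} and \ref{property2} cover every modulus of $P=\{2^{m_{n-1}},2^{m_{n-2}}-1,\dots,2^{m_0}-1\}$. Once $\text{RNSMult}$ is pinned down in this way, the indexing bijection and the array reshaping are routine, and the proposition follows.
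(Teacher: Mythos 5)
Your proposal is correct, but it cannot be ``the same approach as the paper'' because the paper offers no proof of this proposition at all: the statement functions as a definition of the array tensor product imported from \cite{algo2}, with the computability-by-$\text{RNSMult}$ claim simply asserted, and the text proceeds directly to Theorem~\ref{theorem1}. What you supply is therefore strictly more than the paper does, and both halves of it are sound. The per-coordinate Euclidean-division bijection $(k_j,h_j)\mapsto k_jn_j+h_j$ is exactly the right observation for well-definedness and for the shape $m_1n_1\times\cdots\times m_ln_l$, and it is the fact implicitly used later when the proof of Theorem~\ref{theorem2} reads off $SX_{k+2}=SX_{k+1}\otimes SX$ from the concatenated indices $\overline{ll'}_{(7)}$, $\overline{ii'}_{(2)}$, $\overline{jj'}_{(2)}$. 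Your shift-and-add reduction of $|x_iy_i|_{q_i}$ to iterated $\text{RNSAdd}$ calls, running the residue channels in parallel via \eqref{1} and closing each channel with Lemma~\ref{property1} or Lemma~\ref{property2}, is the standard and correct way to realize the ``serial of operations of $\text{RNSAdd}$'' that the proposition gestures at. You are also right to single out the genuine weak point: $\text{RNSMult}$ is never exhibited in the paper (nor, under that name, in \cite{zheng}), so any honest proof must first construct it and verify it stays within the $O(1)$ primitives of \S 2.2 as revised in \S 2.3; that construction, which you sketch but do not fully write out, is the only nontrivial content of the proposition. In short: the paper treats this as a definition, you treat it as a theorem and prove the provable part; your version is the more informative of the two.
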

If we let $P = \otimes_{i=1}^{n}A_i = (\cdots(A_1\otimes A_2)\otimes A_3)\cdots\otimes A_n)$ where $A_i$ is an array of dimension $l$ and of size $m_{i1}\times m_{i2} \times \cdots \times m_{il}$, the following theorem allows us to directly compute the elements of $P$. \emph{All} products and sums of elements can be computed with $\emph{RNSMult}$ and $\emph{RNSAdd}$, respectively.
\begin{theorem} \label{theorem1}
If we let $j_k = \sum_{s=1}^{n}\left(h_{sk}\prod_{r=s+1}^{n}m_{rk}\right)$, then $P[j_1,j_2,\cdots,j_l] = \prod_{i=1}^{n}A_i[h_{i1},h_{i2},\cdots,h_{il}]$.
\end{theorem}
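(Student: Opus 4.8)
The plan is to prove Theorem~\ref{theorem1} by induction on $n$, the number of factors, with the Proposition (the two–factor case $P=A\otimes B$, $P[i_1,\dots,i_l]=A[k_1,\dots,k_l]B[h_1,\dots,h_l]$ with $i_j=k_jn_j+h_j$) supplying both the base of the induction and its inductive step. All of the arithmetic appearing in the statement — the partial products $\prod_{r=s+1}^{n}m_{rk}$, the weighted sums defining the $j_k$, and the product $\prod_{i=1}^{n}A_i[h_{i1},\dots,h_{il}]$ — is to be carried out on the DNA strand representation of \S 3.1 by \emph{RNSMult} and \emph{RNSAdd}; since by relation~\eqref{1} these operations are carry–free and act componentwise on the residue digits, it suffices to verify the index identity over $\mathbf{Z}$, after which it holds digit–by–digit in RNS and hence on the strands. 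The combinatorial heart of the argument is therefore the bookkeeping of the mixed–radix indices, not the molecular biology.

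For $n=1$ the claim is immediate: the product $\prod_{r=2}^{1}m_{rk}$ is empty, so $j_k=h_{1k}$, while $P=A_1$, whence $P[h_{11},\dots,h_{1l}]=A_1[h_{11},\dots,h_{1l}]$. For the inductive step, assume the formula for the iterated product of $n-1$ arrays and set $P'=\otimes_{i=1}^{n-1}A_i$, so that in dimension $k$ the array $P'$ has size $\prod_{i=1}^{n-1}m_{ik}$ and, by hypothesis,
\[
P'[j'_1,\dots,j'_l]=\prod_{i=1}^{n-1}A_i[h_{i1},\dots,h_{il}],\qquad j'_k=\sum_{s=1}^{n-1}h_{sk}\prod_{r=s+1}^{n-1}m_{rk}.
\]
By the left–to–right association of $\otimes$ we have $P=P'\otimes A_n$, and since $A_n$ has size $m_{nk}$ in dimension $k$, the Proposition gives $P[j_1,\dots,j_l]=P'[j'_1,\dots,j'_l]\,A_n[h_{n1},\dots,h_{nl}]$ with $j_k=j'_k\,m_{nk}+h_{nk}$.

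It then remains to check that this recursion telescopes to the closed form in the statement. Multiplying the expression for $j'_k$ by $m_{nk}$ absorbs $m_{nk}$ into each partial product, turning $\prod_{r=s+1}^{n-1}m_{rk}$ into $\prod_{r=s+1}^{n}m_{rk}$ for $s=1,\dots,n-1$, while the trailing $+\,h_{nk}$ is precisely the $s=n$ term $h_{nk}\prod_{r=n+1}^{n}m_{rk}$ under the empty–product convention; hence $j_k=\sum_{s=1}^{n}h_{sk}\prod_{r=s+1}^{n}m_{rk}$. Combining this with $P'[j'_1,\dots,j'_l]=\prod_{i=1}^{n-1}A_i[\cdots]$ yields $P[j_1,\dots,j_l]=\prod_{i=1}^{n}A_i[h_{i1},\dots,h_{il}]$, closing the induction. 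I anticipate no serious obstacle here; the only point demanding real care is keeping the digit ordering consistent with the association of $\otimes$ — so that the \emph{most} significant digit corresponds to $A_1$ and the least to $A_n$ — together with confirming that the empty–product convention makes both the base case and the $s=n$ term behave correctly, since a slip at this point would propagate into the formulas~\eqref{14}--\eqref{16} for $SX_k$, $SY_k$, and $SQ_k$ that this theorem is meant to justify.
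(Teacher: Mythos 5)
Your proof is correct and takes essentially the same route as the paper's: induction on the number of factors, peeling off the last array via $P=\bigl(\otimes_{i=1}^{n-1}A_i\bigr)\otimes A_n$ and applying the two-factor index rule $j_k=j'_k m_{nk}+h_{nk}$ so that the mixed-radix sum telescopes. Your write-up is if anything more explicit than the paper's (which simply asserts the base cases $n=1,2$ and states the telescoped identity) about the empty-product convention and the digit ordering.
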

\begin{proof}
We give a proof by induction. For $n = 1$ and $n = 2$, the statement is true. Assume it is true with $n$, then we shall prove that it is true with $n+1$. \newline
\indent $P_{n+1}[v_1,v_2,\cdots,v_l] = \prod_{i=1}^{n+1}A_i[h_{i1},h_{i2},\cdots,h_{il}]$ where
\begin{equation*}
v_k = \sum_{s=1}^{n+1}\left(h_{sk}\prod_{r=s+1}^{n+1}m_{rk}\right),
\end{equation*}
for $1 \le \forall k \le l$. Hence, $P_{n+1} = P_{n}\otimes A_{n+1}$. \newline
Furthermore, by definition,
\begin{equation*}
P_{n+1}[j_1,j_2,\cdots,j_l] = P_{n}[p_1,p_2,\cdots,p_l]A_{n+1}[h_{(n+1)},h_{2(n+1)},\cdots,h_{l(n+1)}] = \prod_{i=1}^{n+1}A_i[h_{i1},h_{i2},\cdots,h_{il}],
\end{equation*}
where
\begin{equation*}
j_k = \sum_{s=1}^{n}\left(h_{sk}\prod_{r=s+1}^{n+1}m_{rk}\right) + h_{k(n+1)} = \sum_{s=1}^{n+1}\left(h_{sk}\prod_{r=s+1}^{n+1}m_{rk}\right).
\end{equation*}
\end{proof}
\begin{theorem} \label{theorem2}
$SX_k = \otimes_{i=1}^{k}SX$, $SY_k = \otimes_{i=1}^{k}SY$, and $SQ_k = \otimes_{i=1}^{k}SQ$.
\end{theorem}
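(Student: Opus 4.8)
The plan is to derive the three identities by combining two facts already available: the factorisation of the level-$k$ Strassen coefficients recorded in \eqref{14}, \eqref{15}, and \eqref{16}, and the closed form for an iterated tensor product furnished by Theorem~\ref{theorem1}. Throughout, I regard $SX$, $SY$, and $SQ$ as three-dimensional arrays of size $7\times 2\times 2$ indexed by $(l,i,j)$ with $0\le l\le 6$ and $i,j\in\{0,1\}$, whose entries are the strands $(0)_{RNS}$, $(-1)_{RNS}$, $(1)_{RNS}$ of \S 3.1.

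First I would establish \eqref{14}--\eqref{16} in the DNA setting by induction on $k$, mirroring the unfolding of the Strassen execution tree. At recursion level $k+1$ every leaf $t_l$ with $0\le l\le 7^{k+1}-1$ arises from the leaf of the level-$k$ tree indexed by $l'=\lfloor l/7\rfloor$ followed by one further application of the level-$1$ formula \eqref{10}--\eqref{11} indexed by $l_{k+1}=l-7l'$; writing the base-$7$ digits of $l$ as $(l_1,\dots,l_{k+1})$ and the base-$2$ digits of $i,j$ as $(i_1,\dots,i_{k+1})$ and $(j_1,\dots,j_{k+1})$, the position of an entry inside the iterated $2\times 2$ block subdivision is read off digitwise. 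Since the linear forms in \eqref{10}--\eqref{11} compose multiplicatively along a root-to-leaf path, the coefficient attached to $x_{ij}$ at the level-$(k+1)$ leaf equals $SX_k(l',i',j')\cdot SX(l_{k+1},i_{k+1},j_{k+1})$ with $(i',j')$ and $(i_{k+1},j_{k+1})$ the matching digit splittings of $(i,j)$; telescoping gives $SX_{k+1}(l,i,j)=\prod_{r=1}^{k+1}SX(l_r,i_r,j_r)$. The same argument proves the analogue for $SY$, and the corresponding argument tracing the tree from the leaves back to the root proves it for $SQ$. Each scalar product here is a product of elements of $\{(0)_{RNS},(-1)_{RNS},(1)_{RNS}\}$, hence is carried out by $\text{RNSMult}$, so these identities hold verbatim on the DNA computer, reproving the result of \cite{algo2} in the present model.

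Next I would identify the right-hand side of \eqref{14} with Theorem~\ref{theorem1}. Applying that theorem with $n=k$ and $A_1=\dots=A_k=SX$, so that $m_{r1}=7$ and $m_{r2}=m_{r3}=2$ for every $r$, the array $P=\otimes_{i=1}^{k}SX$ has size $7^k\times 2^k\times 2^k$ and
\begin{gather*}
P[j_1,j_2,j_3]=\prod_{i=1}^{k}SX[h_{i1},h_{i2},h_{i3}],\\
j_1=\sum_{s=1}^{k}h_{s1}7^{k-s},\quad j_2=\sum_{s=1}^{k}h_{s2}2^{k-s},\quad j_3=\sum_{s=1}^{k}h_{s3}2^{k-s}.
\end{gather*}
Thus $j_1$ runs over $\{0,\dots,7^k-1\}$ through its base-$7$ digits $(h_{11},\dots,h_{k1})$ while $j_2$ and $j_3$ run over $\{0,\dots,2^k-1\}$ through their base-$2$ digits; identifying $(j_1,j_2,j_3)$ with $(l,i,j)$ and $(h_{s1},h_{s2},h_{s3})$ with $(l_s,i_s,j_s)$ turns the display into $P[l,i,j]=\prod_{s=1}^{k}SX(l_s,i_s,j_s)$, which by \eqref{14} is $SX_k(l,i,j)$. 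Hence $SX_k=\otimes_{i=1}^{k}SX$, and the identical computation with $SY$ and $SQ$ gives $SY_k=\otimes_{i=1}^{k}SY$ and $SQ_k=\otimes_{i=1}^{k}SQ$. Since the construction behind Theorem~\ref{theorem1} is realised entirely by $\text{RNSMult}$ and $\text{RNSAdd}$ (the latter assembling the mixed-radix indices $k_jn_j+h_j$), the identities hold as stated in the DNA computing paradigm.

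The tensor-product bookkeeping of the previous paragraph is routine; the step I expect to be the main obstacle is the unwinding step --- in particular, verifying the multiplicativity of the coefficients along each root-to-leaf path of the execution tree and pinning down the bijection between recursion depth and digit position so that the base-$7$ labelling of leaves and the base-$2$ labelling of block positions are mutually consistent across all three of $SX$, $SY$, and $SQ$. Once that indexing convention is fixed (it is the one already implicit in \eqref{12}--\eqref{13}), both \eqref{14}--\eqref{16} and the tensor identity follow at once, and the DNA layer adds nothing beyond the observation that $\{(0)_{RNS},(-1)_{RNS},(1)_{RNS}\}$ is closed under $\text{RNSMult}$ and that the index assembly is handled by $\text{RNSAdd}$.
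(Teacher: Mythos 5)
Your argument is correct, and its mathematical core --- unfolding one more level of the Strassen recursion, observing that the scalar coefficients attached to $x_{ij}$ multiply along a root-to-leaf path while the leaf index and block position concatenate digitwise in base $7$ and base $2$ respectively --- is exactly the computation the paper performs in \eqref{18}--\eqref{19}. The only real difference is the order of deduction. The paper proves the recursion $SX_{k+2}=SX_{k+1}\otimes SX$ directly by comparing the two expressions for $M[\overline{ll'}_{(7)}]$ and only afterwards derives \eqref{14}--\eqref{16} as corollaries of Theorem~\ref{theorem1} together with Theorem~\ref{theorem2}; you instead prove the product formulas \eqref{14}--\eqref{16} first by induction and then use Theorem~\ref{theorem1} to recognize $\prod_{r}SX(l_r,i_r,j_r)$ as the $(l,i,j)$ entry of $\otimes_{i=1}^{k}SX$. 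Be aware that in the paper's logical layout \eqref{14}--\eqref{16} sit downstream of the theorem you are proving, so simply citing them as ``already available'' would be circular; your proposal avoids this only because you supply an independent induction for them, and that induction is, entry for entry, the same coefficient-comparison the paper uses to prove the tensor recursion directly. Your route buys a slightly cleaner separation between the combinatorial content (multiplicativity along paths) and the tensor-product bookkeeping (Theorem~\ref{theorem1} as a pure indexing lemma); the paper's route is marginally more economical in that the tensor identity is obtained in one pass and the digit-product formula comes for free.
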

\begin{proof}
We give a proof by induction. For $k = 1$, the statement is true. Assume it is true with $k$, then we shall prove that it is true with $k+1$. \newline
\indent According to \eqref{12} and \eqref{13}, at level $k+1$ of the execution tree, for $0 \le l \le 7^{k+1}-1$
\begin{equation*}
T_l = \left(\sum_{i \ge 0, j \le 2^{k+1}-1}X_{k+1,ij}SX_{k+1}(l,i,j)\right) \times \left(\sum_{i \ge 0, j \le 2^{k+1}-1}Y_{k+1,ij}SY_{k+1}(l,i,j)\right).
\end{equation*}
It follows from \eqref{10} and \eqref{11} that at level $k+2$, for $0 \le l \le 7^{k+1}-1$ and $0 \le l' \le 6$,
\begin{equation} \label{17}
\begin{split}
T_l[l'] = \sum_{i' \ge 0, j' \le 1}\left(\sum_{i \ge 0, j \le 2^{k+1}-1}X_{k+1,ij}[i',j']SX_{k+1}(l,i,j)SX(l',i',j')\right) \times \\ 
\sum_{i'\ge 0, j' \le 1}\left(\sum_{i \ge 0, j \le 2^{k+1}-1}Y_{k+1,ij}[i',j']SY_{k+1}(l,i,j)SY(l',i',j')\right),
\end{split}
\end{equation}
where $X_{k+1,ij}[i',j']$ and $Y_{k+1,ij}[i',j']$ are $2^{k+2} \times 2^{k+2}$ matrices obtained by partitioning the matrices $X_{k+1,ij}$ and $Y_{k+1,ij}$ into 4 sub-matrices (we use $i'$ and $j'$ to denote the sub-matrix's quarter). \newline
\indent We represent $l,l'$ in base 7 RNS, and $i,j,i',j'$ in base 2 RNS. Since $X_{k+1,ij}[i',j'] = X_{k+2,ij}[\overline{ii'_2},\overline{jj'_2}]$, then for $0 \le \overline{ll'}_{(7)} \le 7^{k+1}-1$,
\begin{equation} \label{18}
\begin{split}
M[\overline{ll'}_{(7)}] = \left(\sum_{\overline{ii'}_{(2)} \ge 0,\overline{jj'}_{(2)}\le 2^{k+1}-1}X_{k+2}[\overline{ii'}_{(2)},\overline{jj'}_{(2)}]SX_{k+1}(l,i,j)SX(l',i',j')\right) \times \\
\left(\sum_{\overline{ii'}_{(2)} \ge 0,\overline{jj'}_{(2)} \le 2^{k+1}-1}Y_{k+2}[\overline{ii'}_{(2)},\overline{jj'}_{(2)}]SY_{k+1}(l,i,j)SY(l',i'j')\right).
\end{split}
\end{equation}
Moreover, it directly follows from \eqref{12} and \eqref{13} that for $0 \le \overline{ll'}_{(7)} \le 7^{k+1}-1$,
\begin{equation} \label{19}
\begin{split}
M[\overline{ll'}_{(7)}] = \left(\sum_{\overline{ii'}_{(2)} \ge 0,\overline{jj'}_{(2)}\le 2^{k+1}-1}X_{k+2}[\overline{ii'}_{(2)},\overline{jj'}_{(2)}]SX_{k+2}\left(\overline{ll'}_{(7)},\overline{ii'}_{(2)},\overline{jj'}_{(2)}\right)\right) \times \\
\left(\sum_{\overline{ii'}_{(2)} \ge 0,\overline{jj'}_{(2)} \le 2^{k+1}-1}Y_{k+2}[\overline{ii'}_{(2)},\overline{jj'}_{(2)}]SY_{k+2}\left(\overline{ll'}_{(7)},\overline{ii'}_{(2)},\overline{jj'}_{(2)}\right)\right).
\end{split}
\end{equation}
From \eqref{18} and \eqref{19}, we have
\begin{equation*}
SX_{k+2}\left(\overline{ll'_7},\overline{ii'_2},\overline{jj'_2}\right) = SX_{k+1}(l,i,j)SX(l',i'j'),
\end{equation*}
and
\begin{equation*}
SY_{k+2}\left(\overline{ll'_7},\overline{ii'_2},\overline{jj'_2}\right) = SY_{k+1}(l,i,j)SY(l',i'j').
\end{equation*}
Thus,
\begin{equation*}
SX_{k+2} = SX_{k+1}\otimes SX = \otimes_{i=1}^{k+2}SX,
\end{equation*}
\begin{equation*}
SY_{k+2} = SY_{k+1}\otimes SY = \otimes_{i=1}^{k+2}SY,
\end{equation*}
and
\begin{equation*}
SQ_{k+2} = SQ_{k+1}\otimes SQ = \otimes_{i=1}^{k+2}SQ.
\end{equation*}
\end{proof}
From Theorem~\ref{theorem1} and Theorem~\ref{theorem2}, \eqref{14}, \eqref{15}, and \eqref{16} follow. \newline
\indent As a consequence of \eqref{12}-\eqref{16}, we can form the following sub-matrices:
\begin{equation} \label{17}
T_l = \sum_{i,j=0,2^r-1}X_{ij}\left(\prod_{u=1}^{r}SX(l_u,i_u,j_u)\right) \times \sum_{\substack{i,j=0,2^r-1\\l=0\cdots7^r-1}}Y_{ij}\left(\prod_{u=1}^{r}SX(l_u,i_u,j_u)\right).
\end{equation}
As a result of the storage map of sub-matrices to strands presented in \S 4.2, the following sub-matrices can be \emph{locally} determined within each strand, and their product $T_l$ can be computed by the DNA implementation of the Cannon algorithm presented in \S 4.1:
\begin{equation*}
\left(\sum_{\substack{i=0,2^r-1\\j=0,2^r-1}}X_{ij}\left(\prod_{u=1}^{r}SX(l_u,i_u,j_u)\right)\right),
\end{equation*}
and
\begin{equation*}
\left(\sum_{\substack{i=0,2^r-1\\j=0,2^r-1}}Y_{ij}\left(\prod_{u=1}^{r}SY(l_u,i_u,j_u)\right)\right).
\end{equation*}
All of the sub-matrices are added with the $\emph{RNSMatrixAdd}$ algorithm presented in \S 3.2.3. \newline
\indent Lastly, it is important to note that due to \eqref{12}-\eqref{16}, we have derived a method to directly compute the sub-matrix elements of the resultant matrix via the application of matrix additions (using the $\emph{RNSMatrixAdd}$ algorithm of \S 3.2.3) instead of backtracking manually down the recursive execution tree to compute:
\begin{equation} \label{18}
Q_{ij} = \sum_{l=0}^{7^r-1}T_lSQ_r(l,i,j) = \sum_{l=0}^{7^r-1}T_l\left(\prod_{u=1}^{r}SQ(l_u,i_u,j_u)\right).
\end{equation}
\section{Conclusion}
Our general scalable implementation can be used for all of the matrix multiplication algorithms that use fast matrix multiplication algorithms at the top level (between strands) on a DNA computer. Moreover, since the computational complexity of these algorithms decreases when the recursion level $r$ increases, we can now find optimal algorithms for all particular cases. Of course, as mentioned previously in this paper, the current science of DNA computing does not guarantee a perfect implementation of the Strassen algorithm as described herein; for now, these results should be regarded as primarily theoretical in nature.}
\newpage
\bibliographystyle{amsplain}
}}}

\end{document}